\documentclass{ifacconf}

\usepackage{graphicx} 

\usepackage{caption}
\usepackage{subcaption}
\usepackage{color}
\usepackage{amsmath} 
\usepackage{amssymb}
\usepackage{overpic}
\usepackage{wrapfig}
\usepackage{natbib}

\usepackage{xcolor}
\usepackage{soul} 

\makeatletter
\newenvironment{proof}[1][Proof]{%
  \par\noindent\textit{#1. }\rmfamily}{\hfill$\square$\par}
\makeatother

\newtheorem{proposition}{Proposition}
\newtheorem{remark}{Remark}

\begin{document}
\begin{frontmatter}

\title{A Geometric Task-Space Port-Hamiltonian Formulation for Redundant Manipulators} 


\author[RaM]{Federico Califano} 
\author[Sapienzastud]{Camilla Rota} 
\author[RaM]{Riccardo Zanella} 
\author[RaM,Sapienza]{Antonio Franchi}

\address[RaM]{Robotics and Mechatronics (RaM), University of Twente, 
Drienerlolaan 5, 7522 NB Enschede, The Netherlands, 
(e-mail: schol@r-zanella.eu, f.califano@utwente.nl, schol@r-franchi.eu)}

\address[Sapienzastud]{Student at Sapienza University of Rome}

\address[Sapienza]{DIAG, Sapienza University of Rome,
00185 Rome, Italy}

\begin{abstract}                
We present a novel geometric port-Hamiltonian formulation of redundant manipulators performing a differential kinematic task $\eta=J(q)\dot{q}$, where $q$ is a point on the configuration manifold, $\eta$ is a velocity-like task space variable, and $J(q)$ is a linear map representing the task. The proposed model emerges from a change of coordinates from canonical Hamiltonian dynamics, and decomposes the standard Hamiltonian momentum variable into a task-space and a null-space component. Properties of this model and relation to Lagrangian formulations present in the literature are highlighted. Finally, we apply the proposed model in an \textit{Interconnection and Damping Assignment Passivity-Based Control} (IDA-PBC) design to stabilize and shape the impedance of a 7-DOF Emika Panda robot in simulation.
\end{abstract}

\begin{keyword}
Port-Hamiltonian systems, Redundant manipulators, Passivity-based control, Theoretical foundations of robotics. 
\end{keyword}

\end{frontmatter}

\section{Introduction}

Port-Hamiltonian (pH) systems are nonlinear affine systems encoding a rich set of system-theoretical and behavioral properties, conveniently modeling networks of physical systems in any physical domain \citep{Duindam2009ModelingSystems,Rashad2020TwentyReview}. PH describe energy exchange through \textit{power ports}: pairs of dual variables on which energy flows. These ports provide the mechanism for interconnection with external systems. From a control-theoretic perspective, a pH system encodes desirable properties such as passivity, stability, and general energetic behaviors that can serve as performance metrics in complex tasks \citep{vanderSchaftL2,ortega2002interconnection,Ortega2001PuttingControl}.

We present a novel geometric pH model for fully actuated redundant manipulators \citep{Park_Chang_Lee_2001,Dietrich2015} in the task space. In particular, distinct power ports defined in the task-space and in the null-space of the task, emerge. Redundancy is meant with respect to a differential kinematic task $\eta=J(q)\dot{q}$, where $q$ is a configuration of the manipulator, and $J(q)$ linearly maps the joint velocity $\dot{q}$ to a lower dimensional, velocity task-space variable $\eta$. The geometric aspect is encoded in a rigorous treatment of both the differential geometric and linear algebraic aspects of the theory. Other redundancy studies present in the literature are treated in a Lagrangian framework \citep{Michel2022,Michel2020,Ott2015,Dietrich2015,Ott2011} and it is interesting to comment and discuss the differences and advantages emerging in the proposed Hamiltonian context. In the Hamiltonian framework, recently in \citep{Scherpen2025} a task-space impedance shaping procedure for manipulators has been proposed in the pH framework performing a change of coordinates on the Hamiltonian coordinates to get a target system in task space momentum coordinates. However, that work does not address redundancy, which is the key technical challenge tackled in this paper.

The proposed model enables convenient representation of a redundant manipulator interacting with an external system and allows analysis of energy flows through the interconnection induced by the physical interaction, which structurally happens in the task-space. Furthermore the proposed model can be conveniently used in control strategies developed in the pH framework \citep{vanderSchaftL2,Duindam2009ModelingSystems}, such as \textit{Interconnection and Damping Assignment Passivity-Based Control} (IDA-PBC) \citep{ortega2004interconnection,ortega2002interconnection}, \textit{de facto} opening to the possibility of exploring new task-space control possibilities for redundant manipulators. In this respect, a further contribution of this work is the use of the presented model along an IDA-PBC design procedure, which is proposed to stabilize and shape the impedance of a 7-DOF Franka Emika Panda robot.


\section{Setting}
\label{sec:setting}

\subsubsection*{Notation.} Given a vector space $W$, we indicate with $W^*$ its dual space. We represent both covectors $\beta\in W^{*}$ and vectors $w\in W$ in columns of their components, so that $\beta^{\top}w$ returns the application of $\beta$ on $w$. In particular, given the $n$-dimensional configuration space of a mechanical system $Q$, we indicate with $T_qQ$ and $T_q^*Q$ respectively its tangent and co-tangent spaces at the point $q\in Q$, so that the scalar mechanical power $\tau^{\top}\dot{q}$ results from the application of the generalized force covector $\tau\in T_q^*Q$ to the velocity vector $\dot{q}\in T_q Q$. Given a linear map $A:W \to V$, we indicate with $\textrm{Ker}(A)$ its kernel, and with $\textrm{Im}(A)$ its image. Given a vector space $W$ and a vector subspace $S\subseteq W$, we indicate with $\textrm{Ann}(S)=\{\beta \in W^* | \beta^{\top}s=0, \forall s \in S\}$ the annihilator of $S$. If $W$ is equipped with an inner product (or metric) $M:W \times W \to \mathbb{R}$ (calculated as $M(w_1,w_2)=w_1^{\top}Mw_2$ for $w_1,w_2 \in W$), we indicate with $S^{\perp_{M}}=\{w \in W | w^{\top}Ms=0, \forall s \in S\}$ the $M$-orthogonal complement of $S$. 
Given a smooth function $f(x)$ we indicate with $\partial_xf$ the gradient covector, represented as a column vector, and with $\partial_x^2f$ its symmetric Hessian matrix. $I_n$ is the $n \times n$ identity matrix.

We consider a \textit{differential task} defined as a linear map $J(q):T_qQ \to \mathcal{X}$ as
\begin{equation}
\label{eq:task}
    \eta=J(q)\dot{q}
\end{equation}
with $\mathcal{X}$ a $m$-dimensional vector space with $m\leq n$ and $\textrm{rank}(J(q))=m$.
This task can represent, for example, analytic differential kinematics, in which case $J(q)$ is the differential of a forward kinematic map $x=f(q)$ and $\eta=\dot{x}$, but can also encode \textit{non integrable} tasks, e.g., the case in which $J(q)$ is the geometric Jacobian, which maps $\dot{q}$ to the geometric \textit{twist} containing the angular velocity as rotational velocity component \citep{lynch2017modern}.
We consider a fully actuated mechanical system:
\begin{equation}
\label{eq:robot}
M(q)\ddot{q}+h(q,\dot{q})+\partial_qV(q)= \tau
\end{equation}
with $(q \,\,\, \dot{q})\in Q \times T_qQ$ the standard Lagrangian coordinates on the tangent bundle of the $n$-dimensional configuration manifold $Q$, $M(q)=M^{\top}(q)\in \mathbb{R}^{n\times n}$ represents the inertia tensor (an inner product field) on $Q$, $V(q)$ smoothly maps the configuration to the conservative potentials in the system (e.g., elastic and gravitational), $\tau \in T_q^*Q$ are the generalized actuation forces, and $h(q,\dot{q})$ collects the nonlinear terms present in the Euler-Lagrange formulation (\ref{eq:robot}).
We will also use the equivalent port-Hamiltonian formulation:
\begin{align}
\label{eq:robotph}
    \begin{pmatrix}
        \dot{q}\\ \dot{p}
    \end{pmatrix}& = \begin{pmatrix}0&I_n\\ -I_n&0\end{pmatrix} \begin{pmatrix} \partial_qH\\ \partial_pH\end{pmatrix} + \begin{pmatrix} 0 \\ I \end{pmatrix} \tau\\
    y &= \begin{pmatrix}
        0 & I
    \end{pmatrix}
    \begin{pmatrix}
        \partial_q H \\ \partial _p H
    \end{pmatrix}=\dot{q} 
\end{align}
with \textit{Hamiltonian}:
\begin{equation}
\label{eq:Hamiltonian}
  H(q,p)=K(q,p)+ V(q)
\end{equation} 
representing the mechanical energy given by the sum of kinetic energy 
\begin{equation}
\label{eq:kinenergy}
    K(q,p)=\frac{1}{2}p^{\top}M^{-1}(q)p
\end{equation}
and potential energy $V(q)$.
Here the standard Hamiltonian coordinates are $(q \,\,\, p)\in Q \times T_q^*Q$ and the two formulations are connected by 
\begin{equation}
\label{eq:lagrham}
    p=M(q)\dot{q}.
\end{equation}
From the Hamiltonian formulation it is easy to see that the mechanical power balance representing energy conservation in (\ref{eq:robotph}) takes the form:
\begin{equation}
\label{eq:passivity}
  \dot{H}=\tau^{\top}\dot{q}.  
\end{equation}
In system-theoretic terms, (\ref{eq:passivity})  and embodies the property of \textit{passivity} \citep{vanderSchaftL2} of (\ref{eq:robotph}) with storage function $H$ and power conjugated variables (or \textit{port variables}) $(\tau ,\dot{q})\in T_q^*Q\times T_qQ$.
Dissipative effects can be added without any technical problem nor conceptual insight to all the results presented in this paper, and as such will be omitted to enhance readability.

\section{Energy and power decomposition in redundant manipulators}
\label{sec:geometry}

Using (\ref{eq:task}) and the metric $M(q)$ we construct the following decomposition of velocities $\dot{q}\in T_qQ$ and generalized force variables $\tau \in T_q^{*}Q$. Although this section largely reviews known results and uses standard mathematical tools, the proposed duality-based decomposition on the space of generalized forces is, to the best of the authors’ knowledge, novel in its formulation in a robotic context.

\subsubsection{Decomposition of velocities.}

The metric $M(q)$ defines the kinetic energy:
\begin{equation}
\label{eq:kinEnergy}
    K(q,\dot{q})=\frac{1}{2}\dot{q}^{\top}M(q)\dot{q}
\end{equation}
where we slightly abused notation using the same symbol as in the Hamiltonian version (\ref{eq:kinenergy}).
As a matter of fact, $\forall q\in Q$ the metric $M(q)$ induces the geometric decomposition 
\begin{equation}
\label{eq: velocityDecomposition}
T_q Q = \textrm{Ker}(J(q)) \oplus \textrm{Ker}^{\perp_{M}}(J(q)),
\end{equation}
where $\textrm{Ker}^{\perp_{M}}(J(q))$ is the orthogonal complement of $\textrm{Ker}(J(q))$ with respect to $M(q)$.
In other words, $\forall \dot{q} \in T_qQ$ there is a unique decomposition
\begin{equation}
\label{eq:decomposition}
    \dot{q}=\nu+v
\end{equation}
where $\nu \in \textrm{Ker}(J(q))$ and $v \in \textrm{Ker}^{\perp_{M}}(J(q))$, where it holds 
\[
\eta=J(q)\dot{q}=J(q)(v+\nu)=J(q)v.
\]
The kinetic energy (\ref{eq:kinEnergy}), using (\ref{eq:decomposition}) and the ortogonality condition in $\textrm{Ker}^{\perp_{M}}(J(q))$, results in:
\begin{equation}
\label{eq:kineticenergysplit}
    K=\frac{1}{2}v^{\top} M(q) v+\frac{1}{2}\nu^{\top} M(q) \nu,
\end{equation}
i.e., the mixed terms vanish.
The second term in (\ref{eq:kineticenergysplit}) is the kinetic energy in the null space of the task:
$
K_{\nu}(q,\dot{q}):=\frac{1}{2}\nu^{\top} M(q) \nu.
$
To understand the remainder term $K_t(q,\dot{q}):=\frac{1}{2}v^{\top} M(q) v$
as kinetic energy in the task space, we have to use the \textit{dynamically consistent} pseudoinverse \citep{Dietrich2015}:
\begin{equation}
    \label{eq:Jm}
    J_M^{\#}=M^{-1}J^{\top}(JM^{-1}J^{\top})^{-1}
\end{equation}
which maps a task velocity $\eta \in \mathcal{X}$ to $\textrm{Ker}^{\perp_{M}}(J(q))\ni v$ (see Fig. \ref{fig:placeholder}).
Indeed, using $v=J^{\#}_M\eta$, one obtains:
\[K_t(q,\dot{q})=\frac{1}{2} \eta^{\top} (J_M^{\#})^{\top}M(J_M^{\#}) \eta.
\]
Using (\ref{eq:Jm}) the latter becomes
\begin{equation}
\label{eq:kinTask}
K_t(q,\dot{q})=\frac{1}{2}\eta^{\top} \Lambda(q) \eta
\end{equation} where 
\begin{equation}
\label{eq:mobept}
    \Lambda:=(JM^{-1}J^{\top})^{-1}.
\end{equation}

The physical interpretation of $\Lambda(q)$ emerges when describing the robot dynamics in the task space and accounting for external interactions $F_{ext}\in \mathcal{X}^*$ entering at a torque level in (\ref{eq:robot}), i.e., the right hand side of (\ref{eq:robot}) becomes $\tau +J(q)^{\top}F_{ext}$.
Differentiating (\ref{eq:task}) to $
\dot{\eta}=\dot{J}\dot{q}+J\ddot{q}$
and then substituting (\ref{eq:robot}) with this variation, one obtains:
\begin{equation}
\label{eq:robotTask}
\dot{\eta}=\Lambda^{-1} F_{ext}+JM^{-1}(\tau -h-\partial_qV)+\dot{J}\dot{q}
\end{equation}
 In static conditions $\dot{q}=0$ and $\tau=\partial_q V$, the latter results in $\Lambda(q) \dot{\eta}=F_{ext}$
i.e., $\Lambda(q)$ is interpreted as the effective inertia in the end effector space. When $J$ is the analytic or geometric manipulator Jacobian, $\Lambda$ is referred to as \textit{mobility end-point tensor} \citep{Khatib}, but the construction is valid for any differential task (\ref{eq:task}). 

\begin{remark}
Using other pseudo-inverse relations to invert (\ref{eq:task}), such as the ``regular" Moore-Penrose pseudo-inverse $J^{\#}=J^{\top}(JJ^{\top})^{-1}$), would produce a vector $J^{\#}\eta$ that does not belong to the space $\textrm{Ker}^{\perp_{M}}(J)$. As a consequence we would loose the physical null-space/task-space kinetic energy decomposition (\ref{eq:kineticenergysplit}), crucial in the following construction. Regular pseudo-inverse relations of (\ref{eq:task}), as often used in inverse differential kinematics algorithms, generate velocity decompositions that do not carry clear physical meanings.
\end{remark}
\subsubsection{Decomposition of generalized forces.}
From the differential task (\ref{eq:task}) we canonically have the dual map $J^{\top}: \mathcal{X}^*\to T_q^*Q$ as $\tau = J^{\top} \sigma$ where covectors $\sigma \in \mathcal{X}^*$ represent generalized force conjugated with the velocity-like vectors $\eta$.
We introduce now a decomposition on the generalized force space $T_q^*Q$ which carries an important physical meaning and is, in a linear algebraic sense, \textit{dual} to the velocity decomposition (\ref{eq: velocityDecomposition}):
\begin{equation}
\label{eq:ttd}
T_q ^* Q = \textrm{Ann}(\textrm{Ker}^{\perp_{M}}(J(q))) \oplus \textrm{Ann}(\textrm{Ker}(J(q)))
\end{equation}
i.e., we chose to decompose the torque space with a direct sum of the annihilators of the spaces which form the canonical velocity decomposition.
The fundamental subspaces theorem of linear algebra states:
\begin{equation}
\label{eq:ftla}
    \textrm{Ann}(\textrm{Ker}(J))=\textrm{Im}(J^{\top}).
\end{equation}

Furthermore, it is easy to show that: 
\begin{equation}
    \label{eq:spaceId}
      \textrm{Ann}(\textrm{Ker} ^{\perp_M}(J))=\textrm{Ker}(J M^{-1}),
\end{equation}
which fixes the decomposition (\ref{eq:ttd}) to
\begin{equation}
\label{torqueDecomposition}
T_q ^* Q = \textrm{Ker}(J M^{-1})   \oplus \textrm{Im}(J^{\top}).
\end{equation}

As a consequence, $\forall \tau \in T_q^* Q$ can be expressed uniquely as
\begin{equation}
\label{eq:torquesplit}
    \tau=\tau_0+\tau_F
\end{equation}
where $\tau_0 \in \textrm{Ker}(J M^{-1})$ and $\tau_F \in  \textrm{Im}(J^{\top})$. The physical meaning of this decomposition is clear once (\ref{eq:torquesplit}) is used in (\ref{eq:robotTask}): the $\tau_0$ component does not affect the task space dynamic equation as it is pre-multiplied by $JM^{-1}$.

\subsection{Decomposition of power}

Using the presented decompositions on velocity and force space the power balance (\ref{eq:robotph}) can be written as the sum of two terms:
\begin{equation}
\label{eq:powerDecomposition}
\tau^{\top}\dot{q}=(\tau_0+\tau_F)^{\top}(v+\nu)=\tau_0^{\top}\nu+\tau_F^{\top}v
\end{equation}
as the other terms are clearly zero by the definition of annihilators in the force decomposition (\ref{eq:ttd}). In other words, similarly of what happened for kinetic energy, we can geometrically distinguish a power components at the task level $\tau_F^{\top} v$, and one in the null space of the task $\tau_0^{\top} \nu$.
\begin{remark}
It is worth noticing that $\tau_F^{\top}\nu=0$ is true independently of the metric $M$, as (\ref{eq:ftla}) does not contain metric information. 
\end{remark}

The discussed decompositions can be computed using so called \textit{dynamic consistent} projectors (see Fig. \ref{fig:placeholder} and \cite{Dietrich2015}).
In particular, the velocity projector: $P=J^{\#}_MJ$
(resp. $(I-P)$), projects any $\dot{q}$ in the space $\textrm{Ker}^{\perp_{M}}(J)\ni v$ (resp. in the space $\textrm{Ker}(J)\ni \nu$), so that $P \dot{q}$ (resp. $(I-P)\dot{q}$) is a coordinate based representations in $n$ components, i.e., using the same basis of $T_qQ$ where $\dot{q}$ is expressed, of the vector $v$ (resp. $\nu$).
Dually, for the torques, we have the projector: $P^{T}=J^{\top}(J^\#_M)^{\top}$
(resp. $(I-P^{T})$), projects any $\tau$ in the space $\textrm{Im}(J^{\top})\ni \tau_F$ (resp. in the space $\textrm{Ker}(JM^{-1})\ni \tau_0$), so that $P^{T} \tau$ (resp. $(I-P^{T})\tau$) is a coordinate based representations in $n$ components, i.e., using the same basis of $T^*_qQ$ where $\tau$ is expressed, of the covector $\tau_F$ (resp. $\tau_0$).


\section{A port-Hamiltonian formulation of redundant manipulators in task space}
\label{sec:ph}

We now present a novel port-Hamiltonian formulation for manipulators along differential kinematic tasks (\ref{eq:task}) which displays the presented geometric decomposition of energy and power. The strategy is to perform a change of coordinates on the standard Hamiltonian state variables in (\ref{eq:robotph}) that let the structure emerge. The change of coordinates is inspired by a construction presented in \citep{park2} at a Lagrangian level, which we first report and then reinterpret and utilize in the Hamiltonian setting.
First we define the \textit{extended task velocity}:
\begin{equation}
\label{eq:taskVelocity}
\eta_e = \bar{J}(q)\,\dot{q},
  \qquad
  \bar{J}(q)=\begin{bmatrix}J(q)\\ N(q)\end{bmatrix}\in\mathbb{R}^{n\times n},
\end{equation}
and the \textit{extended mobility end point tensor}:
\begin{equation}
\label{eq:barGamma}
  \bar{\Lambda}(q) = (\bar{J}(q) M^{-1}(q) \bar{J}^{\top}(q))^{-1}\in \mathbb{R}^{n\times n}.
\end{equation}

The matrix $N(q) \in \mathbb{R}^{(n-m)\times n}$ is chosen such that $\bar{J}$ is invertible (and as a consequence $\bar{\Lambda}$ always exists) and such that $\bar{\Lambda}$ becomes block-diagonal, i.e., $\bar{\Lambda}=\textrm{blockdiag}(\Lambda_t,\Lambda_{\nu})$ where $\Lambda_t\in \mathbb{R}^{m \times m}$ and $\Lambda_{\nu}\in \mathbb{R}^{(n-m) \times (n-m)}$.
With this construction the extended task velocity gets decomposed into the task velocity $\eta=J\dot{q}$ as in (\ref{eq:task}) and the remaining $(n-m)$ components forming a vector $\nu=N\dot{q}$ representing the \textit{null space velocities} to the task encoded by $J$. In practice the matrix $N$ is constructed as follows: one chooses a full row rank null space base matrix of $J$, i.e., a matrix $Z(q)$ such that $JZ^{\top}=0$ and then one defines $N(q)=(Z(q)M(q)Z(q)^\top)^{-1}Z(q)M(q)$, which guarantees both the required properties for $N$. From a geometric perspective (see Fig. \ref{fig:placeholder} and the discussion about projectors in Sec. \ref{sec:geometry}) $N$ and $Z$ represent dual maps similar to the dynamically consistent projectors $(I-P)$ and $(I-P^{\top})$, where the difference is that the formers represent the result on a $(n-m)$-dimensional basis of respectively $\textrm{Ker}(J)$ and $\textrm{Ker}(JM^{-1})$, while the latters on a $n$-dimensional basis of $T_qQ$ and $T_q^*Q$\footnote{This fact justifies the slight abuse of notation for the same symbol $\nu \in \textrm{Ker}(J)$ used in both representations.}.
As noted in \citep{Ott2008}, the extended velocity $\eta_e$ is in general \emph{non–integrable}. For the present setting this observation is crucial because it prevents the formulation of a conventional Hamiltonian system directly in task space coordinates with position variable $\int_t \eta_e$\textrm{dt}. To preserve a rigorous port-Hamiltonian structure while retaining task-level information, the key idea is to adopt an hybrid joint/task space coordinates in which configuration coordinates are kept in joint space $q$ and new momentum variables are introduced at extended task-space level. 
Specifically, we introduce the following change of coordinates $\Phi$ from the canonical Hamiltonian coordinates in (\ref{eq:robotph}) to a new state variable $z$ as:
\begin{equation}\label{eq:hybrid-change-intro}
  z =\Phi \begin{pmatrix}
        q \\
       p
  \end{pmatrix}
  := \begin{pmatrix}
      q \\ 
      \bar{J}^{-\top}(q)\,p
  \end{pmatrix},
\end{equation}
i.e., the new variable $z$ copies the $q$ component present in standard Hamiltonian formulation and generates an \textit{extended momentum variable}
$\pi_e:=\bar{J}^{-T}(q)p.$ 
From this definition and using (\ref{eq:taskVelocity}), (\ref{eq:barGamma}) and (\ref{eq:lagrham}), it is straightforward to prove and implement the decomposition into a \textit{task-space momentum} and \textit{null-space momentum}:
\begin{equation}
\label{eq:momentumsplit}
   \pi_e=\bar{\Lambda}(q)\eta_e=\begin{pmatrix}
        \Lambda_t\eta \\
       \Lambda_{\nu}\nu
  \end{pmatrix} =: \begin{pmatrix}
        \pi \\
       \pi_{\nu}
  \end{pmatrix}.
\end{equation}
The physical meaning of this variables follows by expressing the kinetic energy (\ref{eq:kinenergy}) in the new coordinates:
\begin{align*}
\bar{K}(q,\pi_e):&=K(\Phi^{-1}(q,\pi_e))=\frac{1}{2}\pi_e^{\top}\bar{\Lambda}^{-1}(q)\pi_e.
\end{align*}
Using the fact that $\bar{\Lambda}(q)$ (and as consequence its inverse) is blockdiagonal, the momentum decomposition in (\ref{eq:momentumsplit}) acquires a physical interpretation in terms of kinetic energy splitting as:
\[
\bar{K}(q,\pi_e)=\underbrace{\frac{1}{2}\pi^{\top}\Lambda_t^{-1}(q)\pi}_{\bar{K}_t(q,\pi)}+\underbrace{\frac{1}{2}\pi_\nu^{\top}\Lambda_\nu^{-1}(q)\pi_\nu}_{\bar{K}_\nu(q,\pi_\nu)}.
\]
This is exactly, at a momentum level, the kinetic energy decomposition described in (\ref{eq:kineticenergysplit}): the task space kinetic energy (\ref{eq:kinTask}) is here expressed as a quadratic form on the task space momentum variable $\pi$, while the null space kinetic energy on the null space momentum variable $\pi_\nu$.
We are now ready to present the full port-Hamiltonian system in the new coordinates $q$, $\pi$ and $\pi_\nu$.
\\
\begin{proposition}
Given a mechanical system (\ref{eq:robotph}) and differential task (\ref{eq:task}), the port-Hamiltonian system in the variables defined by (\ref{eq:hybrid-change-intro}) and (\ref{eq:taskVelocity}) is given by:
\begin{align}\label{extpH}
   \begin{pmatrix}
       \dot q \\ \dot \pi \\ \dot \pi_\nu
   \end{pmatrix} &= \underbrace{\begin{pmatrix}
        0 & J^\#_M & Z^\top\\
        -J^{\#^\top}_M & G_{11} & G_{12}\\
        -Z & -G_{12}^{\top} & G_{22}
        \end{pmatrix}}_{J_z} \begin{pmatrix}
        \partial_q \bar{H} \\
        \partial _{\pi} \bar{H}\\
        \partial _{\pi_\nu} \bar{H}
    \end{pmatrix} + \begin{pmatrix}
        0 \\ \sigma \\ 0
    \end{pmatrix}+ \begin{pmatrix}
        0 \\ 0 \\ \tau_0
    \end{pmatrix}\\ 
    y&=\begin{pmatrix}
        0 & I_m & 0
    \end{pmatrix} \begin{pmatrix}
        \partial _q \bar{H}\\  \partial _\pi \bar{H} \\ \partial _{\pi_\nu} \bar{H}
    \end{pmatrix}=\partial _\pi \bar{H}=\eta \\
    y_\nu &=\begin{pmatrix}
        0 & 0 & I_{n-m}
    \end{pmatrix} \begin{pmatrix}
        \partial _q \bar{H}\\  \partial _\pi \bar{H} \\ \partial _{\pi_\nu} \bar{H}
    \end{pmatrix}=\partial _{\pi_\nu} \bar{H}=\nu 
\end{align}
with Hamiltonian
    \[\bar{H}(z):=H(\Phi^{-1}(z))=\frac{1}{2} \pi^\top \Lambda_t^{-1}(q) \pi + \frac{1}{2} \pi_\nu \Lambda^{-1}_\nu(q) \pi_\nu + V(q)\] and
   $ G_{11}=-G_{11}^{\top}=A_t  J^{\#}_M- J^{\#^\top}_MA_t^\top$, $
    G_{12}=A_t  Z^\top- J^{\#^\top}_MA_n^\top$, $
    G_{22}=-G_{22}^{\top}=A_n  Z^\top- ZA_n^\top$,
with $A_t=(\partial_q J_M^{\#^\top})\bar{J}^\top \pi_e+J_M^{\#^\top}(\partial _q \bar{J}^\top)\pi_e$ and $A_n=(\partial_q Z)\bar{J}^\top \pi_e+Z(\partial _q \bar{J}^\top)\pi_e$.
Furthermore $\begin{bmatrix}
        J^{\#}_M & Z^\top
    \end{bmatrix}=\bar{J}^{-1}$, $\sigma =(J^{\#}_M)^{\top}\tau$ and $\tau_0=Z\tau$. The null space port variables $\tau_0 \in \textrm{Ker}(JM^{-1})$ and $\nu\in \textrm{Ker}(J)$ are exactly the velocity components and generalized force components in the decompositions (\ref{eq: velocityDecomposition}) and (\ref{torqueDecomposition}), while the task space power port variables $\sigma\in \mathcal{X}^{*}$ and $\eta\in \mathcal{X}$ are isomorphic to $\tau_F\in \textrm{Im}(J^{\top})$ and $v\in \textrm{Ker}^{\perp_{M}}(J)$ in (\ref{eq:powerDecomposition}) and related by $\tau_F^{\top}v=\sigma^{\top}\eta$, i.e., their pairing is the mechanical power at the task port.
\end{proposition}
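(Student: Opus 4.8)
The strategy is to recognise \eqref{extpH} as the image of the canonical system \eqref{eq:robotph} under the diffeomorphism $\Phi$ of \eqref{eq:hybrid-change-intro}, via the standard transformation rule for port-Hamiltonian systems. Write $x=(q,p)$ for the canonical state, $\mathbb{J}$ and $g=\begin{pmatrix}0\\ I_n\end{pmatrix}$ for the interconnection and input matrices in \eqref{eq:robotph}, and $T(x)=\partial\Phi/\partial x$ for the Jacobian of $\Phi$. From $\dot z=T\dot x$ and the chain rule $\partial_xH=T^{\top}\partial_z\bar H$ one gets that in the $z$-coordinates the interconnection matrix is $J_z=T\mathbb{J}T^{\top}$ (evaluated at $x=\Phi^{-1}(z)$), the Hamiltonian is $\bar H=H\circ\Phi^{-1}$, the input matrix is $Tg$, and the collocated output is $(Tg)^{\top}\partial_z\bar H$. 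Since $\Phi$ only replaces $p$ by $\pi_e=\bar J^{-\top}(q)p$ and keeps $q$, its Jacobian is block lower-triangular, $T=\begin{pmatrix}I_n & 0\\ \Pi & \bar J^{-\top}\end{pmatrix}$ with $\Pi:=\partial_q\bigl(\bar J^{-\top}(q)p\bigr)$ (the Jacobian in $q$ at fixed $p$); hence $\bar H$ is exactly the kinetic-plus-potential expression computed just above the statement, using that $\bar\Lambda^{-1}=\bar JM^{-1}\bar J^{\top}$ is block-diagonal (because $JM^{-1}N^{\top}=0$, in turn because $JZ^{\top}=0$).

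The linchpin is the identity $\bar J^{-1}=\begin{bmatrix}J_M^{\#} & Z^{\top}\end{bmatrix}$, which I would prove by checking $\bar J\begin{bmatrix}J_M^{\#} & Z^{\top}\end{bmatrix}=I_n$ block by block: $JJ_M^{\#}=JM^{-1}J^{\top}(JM^{-1}J^{\top})^{-1}=I_m$; $JZ^{\top}=0$ by the choice of $Z$; $NJ_M^{\#}=(ZMZ^{\top})^{-1}ZJ^{\top}(JM^{-1}J^{\top})^{-1}=0$ since $ZJ^{\top}=(JZ^{\top})^{\top}=0$; and $NZ^{\top}=(ZMZ^{\top})^{-1}ZMZ^{\top}=I_{n-m}$. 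Consequently $\bar J^{-\top}=\begin{pmatrix}J_M^{\#\top}\\ Z\end{pmatrix}$. Multiplying out $J_z=T\mathbb{J}T^{\top}$ blockwise then gives the claimed structure: its off-diagonal blocks are the column-blocks of $\bar J^{-1}$, namely $J_M^{\#}$ and $Z^{\top}$, and their sign-reversed transposes $-J_M^{\#\top}$ and $-Z$; the bottom-right block equals $\Pi\bar J^{-1}-(\Pi\bar J^{-1})^{\top}$, skew-symmetric by construction. Expanding $\Pi$ with the product rule, substituting $\bar J^{-\top}=\begin{pmatrix}J_M^{\#\top}\\ Z\end{pmatrix}$ and $p=\bar J^{\top}\pi_e$, and sorting entries produces the stated $G_{11},G_{12},G_{22}$, written through the derivative-contractions $A_t,A_n$ of $\partial_q(\cdot)$ against the extended momentum; the diagonal blocks $G_{11},G_{22}$ come out antisymmetric automatically, while $G_{12}$ couples the task and null channels. (Equivalently one may differentiate $\pi_e=\bar J^{-\top}p$ directly along \eqref{eq:robotph}, using $\dot q=M^{-1}p$, $\dot p=-\partial_qH+\tau$, and $\partial_qH|_p=\partial_q\bar H+\Pi^{\top}\eta_e$; the symmetric parts of the resulting Christoffel-type terms drop out since $J_z$ is skew.) Finally $Tg\,\tau=\begin{pmatrix}0\\ \bar J^{-\top}\tau\end{pmatrix}=\begin{pmatrix}0\\ J_M^{\#\top}\tau\\ Z\tau\end{pmatrix}$ splits into the two control terms with $\sigma=J_M^{\#\top}\tau$ and $\tau_0=Z\tau$; since $\bar J$ is invertible, $\tau\mapsto(\sigma,\tau_0)$ is a bijection, so these are legitimate independent inputs, and the collocated outputs are $\partial_\pi\bar H=\Lambda_t^{-1}\pi=\eta$, $\partial_{\pi_\nu}\bar H=\Lambda_\nu^{-1}\pi_\nu=\nu$ by \eqref{eq:momentumsplit}.

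It remains to match the new port variables with the decompositions of Sec.~\ref{sec:geometry}. From $\bar J^{-1}\bar J=I_n$, i.e.\ $J_M^{\#}J+Z^{\top}N=I_n$, the dynamically consistent projectors are $P=J_M^{\#}J$ and $I-P=Z^{\top}N$; hence $v=P\dot q=J_M^{\#}\eta\in\textrm{Ker}^{\perp}(J)$ and, dually, $\tau_F=P^{\top}\tau=J^{\top}J_M^{\#\top}\tau=J^{\top}\sigma\in\textrm{Im}(J^{\top})$, while the $n$-dimensional lift of $\tau_0$ is $(I-P^{\top})\tau=N^{\top}Z\tau=N^{\top}\tau_0$ with $\textrm{Im}(N^{\top})=\textrm{Im}(MZ^{\top})=\textrm{Ker}(JM^{-1})$ (since $JM^{-1}\cdot MZ^{\top}=JZ^{\top}=0$ and dimensions agree); thus $\tau_0$ represents the $\textrm{Ker}(JM^{-1})$-component of \eqref{eq:torquesplit} and $\nu=N\dot q$ the $\textrm{Ker}(J)$-component of \eqref{eq: velocityDecomposition}. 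Lastly $\tau_F^{\top}v=(J^{\top}\sigma)^{\top}(J_M^{\#}\eta)=\sigma^{\top}(JJ_M^{\#})\eta=\sigma^{\top}\eta$ is the task-port power, and consistency with \eqref{eq:powerDecomposition} follows from the passivity identity $\dot{\bar H}=\partial_z\bar H^{\top}J_z\partial_z\bar H+\eta^{\top}\sigma+\nu^{\top}\tau_0=\sigma^{\top}\eta+\tau_0^{\top}\nu$. The only real obstacle is the bottom-right block of $J_z$ — keeping the index bookkeeping of the $q$-derivatives of $\bar J^{-\top}$ under control — everything else reducing to the single linear-algebraic identity $\bar J^{-1}=\begin{bmatrix}J_M^{\#} & Z^{\top}\end{bmatrix}$ together with the generic port-Hamiltonian change-of-coordinates formula.
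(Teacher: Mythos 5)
Your proposal follows essentially the same route as the paper's proof: push the canonical system (\ref{eq:robotph}) through $\Phi$ using the block lower-triangular Jacobian $F=\partial\Phi/\partial h$, obtain $J_z=F\,\mathbb{J}\,F^{\top}$ and input map $\bar J^{-\top}\tau$ by the standard pH change-of-coordinates rule, and read off the blocks after substituting $p=\bar J^{\top}\pi_e$. You in fact supply more detail than the paper does at two points it leaves implicit — the block-by-block verification of $\bar J^{-1}=\begin{bmatrix}J_M^{\#} & Z^{\top}\end{bmatrix}$ (which the paper only cites from the literature) and the identification of $(\sigma,\eta)$ and $(\tau_0,\nu)$ with the decompositions of Sec.~\ref{sec:geometry} — and your $\tau_0=Z\tau$ is the dimensionally consistent reading of the statement's $\tau_0=Z^{\top}\tau$.
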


\begin{proof}
The expression $\begin{bmatrix}
        J^{\#}_M & Z^\top
    \end{bmatrix}=\bar{J}^{-1}$ follows from the standard construction of $\bar{J}$ (see \cite{Ott2015}).
Indicating with $h=
        (q^{\top} 
       p^{\top})^{\top}$ the standard Hamiltonian coordinates in (\ref{eq:robotph}), the port-Hamiltonian structure is preserved under the change of coordinates (\ref{eq:hybrid-change-intro}). In particular we have $\dot{z}=F \dot{h}$ with 
  \begin{equation*}
    F=\frac{\partial \Phi}{\partial h}=
    \begin{pmatrix}
        \frac{\partial q}{\partial q} & \frac{\partial q}{\partial p} \\
        \frac{\partial \pi_e}{\partial q} & \frac{\partial \pi_e}{\partial p}
    \end{pmatrix} = \begin{pmatrix}
        I_n & 0 \\
        \frac{\partial \bar{J}^{-\top}}{\partial q} p & \bar{J}^{-\top}
    \end{pmatrix}.
\end{equation*}
Substituting (\ref{eq:robotph}), and using the chain rule $\partial_h H=F^{\top}\partial_z \bar{H} $ one obtains: \[
\dot{z}=F \dot{h} =\underbrace{\Bigg(F\begin{pmatrix}0&I_n\\ -I_n&0\end{pmatrix}F^\top\Bigg)}_{J_z} \partial_{z}\bar{H}+  \begin{pmatrix}
   0 \\ \bar{J}^{-T}
\end{pmatrix}  \tau.\] The structure (\ref{extpH}) follows by calculation, taking care of expressing $F$ as function of the new coordinates $z$.
\end{proof}

We list properties of interest of the presented model:
\textbf{(i)} The model can be seen as a port-Hamiltonian counterpart of the Lagrangian model described in \cite{Ott2008}. In particular, when $J$ represents the Jacobian matrix of a manipulator, the evolution equation on the task-space momentum $\pi$ is an Hamiltonian counterpart of the classical task space dynamics ``à la Khatib" (\ref{eq:robotTask}) \citep{Khatib};
\textbf{(ii)} The structural properties of the pH system makes energy-based analysis much easier than in Lagrangian descriptions. As an example, in case of autonomous system $\tau_0=0$, $\sigma=0$, the energy conservation $\dot{\bar{H}}=(\partial_z\bar{H})^{\top}\dot{z}=(\partial_z\bar{H})^{\top}J_z\partial_z\bar{H}=0$ follows immediately by skew-symmetry of $J_z$; \textbf{(iii)} The power balance encoded in the pH model (\ref{extpH}) \begin{equation}
\label{eq:passivitysplit}
\dot{\bar{H}}=\sigma^{\top}\eta+\tau_0^{\top}\nu
        \end{equation}
        displays more information than the one in the standard coordinates (\ref{eq:passivity}).
        In particular the right hand side in (\ref{eq:passivitysplit}) mimics exactly the power decomposition (\ref{eq:powerDecomposition}), explicitly displaying a task port and a null space port in which the variables carry the meaning described in Sec. \ref{sec:geometry}. From a system theoretic perspective, this formulation encodes passivity of the system with respect to the storage function $\bar H$ and two distinct pairs of power conjugated variables $(\sigma,\eta)\in \mathcal{X}^*\times \mathcal{X}$ (the task port variables, which can be used for modelling external interaction at the task port) and $(\tau_0,\nu)\in \textrm{Ker}(JM^{-1})\times \textrm{Ker}(J)$ (the null space port variables);
    \textbf{(iv)} The off-diagonal terms $G_{12}$ and $-G_{12}^\top$ introduce internal couplings between the momentum components, explicitly revealing how kinetic energy is exchanged between task and null-space dynamics;
    \textbf{(v)} As highlighted in Fig. \ref{fig:placeholder}, the change of coordinates $\Phi$ in (\ref{eq:hybrid-change-intro}) carries a clear geometric meaning: it lifts the momentum space $T_q^*Q\ni p$ into the new momentum space $\mathcal{X}^*\times \textrm{Ker}(JM^{-1})\ni (\pi,\pi_\nu)$ where the first slot contains information at a task level while the second at a null space level.

To summarize, this formulation highlights the passivity of the system with respect to a decomposed task/null space power port and displays the energetic coupling between joint coordinates, task variables, and null-space dynamics. Furthermore, it opens to energy-based designs developed in the port-Hamiltonian framework.
\begin{figure}[t]
    \centering
\includegraphics[width=\columnwidth]{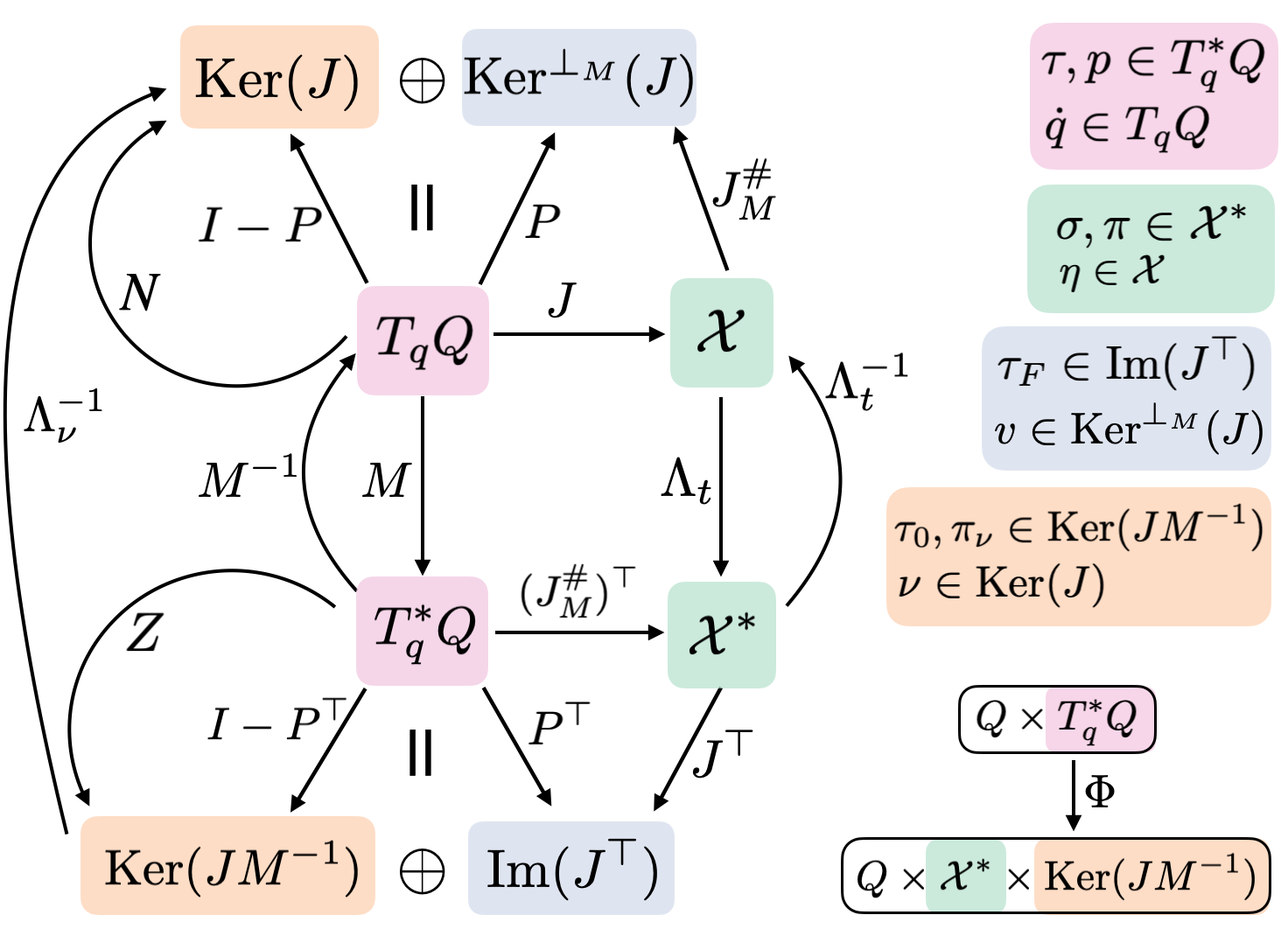}
    \caption{A diagram illustrating the maps introduced in this work. Spaces highlighted with the same color contain power-conjugated dual variables.}
    \label{fig:placeholder}
\end{figure}
\section{IDA-PBC design for task equilibrium assignment and impedance shaping}
\label{sec:idapbc}
Now we will use the presented pH model to design a control law for redundant manipulators aiming at i) stabilizing a desired configuration in the task space, and ii) exploiting redundancy to shape desired impedance behavior at the task port. 
We will use the \textit{Interconnection and Damping Assignment - Passivity-Based Control} (IDA-PBC) approach, developed in the pH framework and aiming to shape, through state-feedback, the closed-loop dynamics as a desired pH structure.
As task, we consider an instance of (\ref{eq:task}) where $J$ is the analytic Jacobian of a redundant manipulator, i.e., $\eta=\dot{x}$, where $x\in\mathbb{R}^3$ are the chosen task-space position cartesian coordinates in euclidean space. We want to stabilize a given task equilibrium $x^*$ and assign a desired impedance profile that the manipulator exhibits in its neighborhood.
We take as objective the transformation of the system (\ref{extpH}) into the target system:
\begin{equation}
\label{eq:idapbctarget}
    \begin{pmatrix}
        \dot q\\ \dot \pi \\ \dot \pi_\nu 
    \end{pmatrix}=\underbrace{\begin{pmatrix}
        0 & J_{M}^\# & Z^T \\ -J_{M}^{\#^T} & \bar G_{11} & \bar G_{12}\\
        - Z & -\bar{G}_{12}^{\top} & \bar G_{22}
    \end{pmatrix}}_{J_d}\begin{pmatrix}
        \partial_q{H_{\textrm{cl}}}\\  \partial_\pi{H_{\textrm{cl}}}\\  \partial_{\pi_\nu}{H_{\textrm{cl}}}
    \end{pmatrix}
\end{equation}
where $H_{\textrm{cl}}$ is the new Hamiltonian function. Equating (\ref{extpH}) to (\ref{eq:idapbctarget}) we obtain
\begin{equation}
\label{eq:fullmatching}
    J_{z} \partial_z \bar{H} + \begin{pmatrix}
        0 & \sigma^{\top} & \tau_0^{\top}
    \end{pmatrix}^{\top}= J_d \partial_z H_{\textrm{cl}}.
\end{equation}
Pre-multiplying the latter by $g^\perp = \begin{bmatrix} I_n & 0 & 0 \end{bmatrix}$
we obtain the so-called \textit{matching conditions}, a set of nonlinear PDEs on the non actuated coordinates whose feasibility determines the possibility to successfully implement the IDA-PBC procedure. In this case we get:
\begin{equation}\label{matchingeq}
    J_{M}^{\#} (\partial_\pi \bar{H}) + Z^T (\partial_{\pi_\nu} \bar{H}) = J_{M}^\# (\partial_\pi H_{\textrm{cl}}) +  Z^T (\partial_{\pi_\nu} H_{\textrm{cl}})
\end{equation}
The simplest way to solve the matching conditions is to preserve the structure of the kinetic energy, i.e. to choose
\begin{equation}
\begin{cases}
    \partial_\pi H_{\textrm{cl}} = \partial_\pi \bar{H} \\
    \partial_{\pi_\nu} H_{\textrm{cl}} = \partial_{\pi_\nu} \bar{H}.
\end{cases}
\end{equation}
The target Hamiltonian will thus have the form:
\begin{equation}\label{desiredH}
   H_{\textrm{cl}}(z) = \bar{K}_t(q,\pi) + \bar{K}_\nu(q,\pi_\nu) + \bar{V}(q). 
\end{equation}
where $\bar{V}(q)=V(q)+V_c(q).$
The potential $V_c(q)$ represents the energy–shaping term introduced by the controller. 
Its role is to assign the desired equilibrium configuration $q^*$ as a critical point of the shaped energy $H_{\textrm{cl}}$.  
Accordingly, $V_c$ must be chosen to satisfy the \textit{extremum and minimum assignment conditions}:
\begin{equation}
\label{eq:minimumassignment}
    \partial_q \bar{V}(q^*) = 0, 
    \qquad 
    \partial_q^2 \bar{V}(q^*) \succ 0,
\end{equation}
so that $(q^*,0,0)$ becomes a strict local minimum of the closed–loop Hamiltonian $H_{\textrm{cl}}$. Moreover, the equilibrium should also satisfy the desired equilibrium assignment at a task level $ x^* = f(q^*)$.
By substituting the desired Hamiltonian \eqref{desiredH} into the matching equation \eqref{eq:fullmatching}, 
the control inputs can be explicitly determined, resulting in the state feedback law:
\begin{equation}
\label{eq:IDAPBC}
    \begin{pmatrix}
        \sigma \\
        \tau_0
    \end{pmatrix}=-\bar{J}^{-\top} (\partial_q V_c(q))+\begin{pmatrix}
        \bar G-G
    \end{pmatrix} \begin{pmatrix}
        \eta \\ \nu
    \end{pmatrix}-D\begin{pmatrix}
        \eta \\ \nu
    \end{pmatrix}
\end{equation}
where we collected the open loop $G_{ij}$ and closed-loop $\bar{G}_{ij}$ terms in skew-symmetric matrices $G$ and $\bar{G}$. The last addend is an extra control input 
(with $D=D^{\top}=\textrm{blockdiag}(D_t,D_\nu)\geq 0$) to inject artificial damping to increase the convergence rate to the closed-loop equilibrium, as clear from the closed-loop power balance:
\[ \dot{H}_{\textrm{cl}}=-
\begin{pmatrix}
        \eta \\ \nu
    \end{pmatrix}^{\top} D\begin{pmatrix}
        \eta \\ \nu
    \end{pmatrix}\leq0\]
The resulting stabilizing control law (\ref{eq:IDAPBC}) therefore consists of three distinct components: \textbf{(i)} \textit{Energy Shaping:} The term $-\bar{J}^{-\top} (\partial_q V_c(q))$  shapes the $H_{\textrm{cl}}$ in a  way to have the minimum at $z^*=(q^*,0,0)$; \textbf{(ii)} \textit{Energy Routing:} The second term modifies the interconnection matrix in the momenta dynamics. This term, whose design completely relies on the presented model, redistributes energy between the task and null-space ports in a power-preserving way, i.e., not contributing to the total power balance $\dot{H}_{\textrm{cl}}$; \textbf{(iii)} \textit{Damping Injection:}
    The last term introduces the dissipative effect required to asymptotically stabilize the equilibrium.

At this point we perform an impedance shaping step, beyond the just described stabilization. First we fix the energy routing component by choosing $\bar{G}_{11}=0$ and $\bar{G}_{12}=0$ in (\ref{eq:IDAPBC}) to simplify the impedance dynamics without compromising the stabilization. Since this external interaction enters the torque space as $J^{\top} F_{ext}$, it is easy to show that the closed-loop task momentum dynamics results in:
\begin{equation} \label{eq_task_ida}
    \dot \pi= -(J_M^\#)^\top \partial_q H_{\textrm{cl}} - D_t \partial_\pi H_{\textrm{cl}}+ F_{ext}
\end{equation}
Using $\pi=\Lambda_t  \dot{x}$, its time differentiation $ \dot \pi = \dot \Lambda_t \dot x+ \Lambda_t \ddot{x}$, we can rewrite \eqref{eq_task_ida} as:
\begin{equation}
\label{eq:impedanceclear}
    F_{ext}=  \Lambda_t \ddot{x}+ (\dot{\Lambda}_t+D_t)  \dot{x} + (J_M^\#)^\top (\partial_q {\bar H} + \partial_q \bar{V})
\end{equation}
which represents the nonlinear closed-loop impedance dynamics at the task port of interest. The part of the impedance that can be actively shaped by the proposed control design is the stiffness term, through the added potential $V_c(q)$. 
A clearer interpretation of the impedance behavior is obtained by linearizing (\ref{eq:impedanceclear}) around the task equilibrium $x^*$, characterized by the equilibrium state $z^*=(q^*,0,0)$ with a (non unique) $q^* \in f^{-1}(x^*)$, which is a minimum of $\bar{V}(q)$ as (\ref{eq:minimumassignment}) holds.
Using (\ref{eq:minimumassignment}) and introducing the perturbation variables $\delta x=x-x^*\approx J(q^*)\delta q$ and its dynamically consistent inverse $\delta q=J^{\#}_M(q^{*})\delta x$, the first order approximation around $x^{*}$ of (\ref{eq:impedanceclear}) results in:
\begin{equation}
    \Lambda_t(q^*) \delta \ddot x + D_t  \delta \dot{x} +\tilde{K}_{t}(q^*)  \delta x = F_{ext},
\end{equation}
where $\tilde{K}_{t}(q^*)=(J_M^{\#})^\top(q^*)\partial_q^2\bar{V}(q^*)J_M^{\#}(q^*).$
In Fig. \ref{fig:all_column} we report simulation performed on a 7-DOF Franka Emika Panda robot, with $x^*=(0.43,0,0.8)$. The degree of redundancy is $n-m=7-3=4$. We display simulations for two different instances of $q^*_k \in f^{-1}(x^*)$ ($k \in \{0,1 \}, q^*_0=[ 0,  -0.3,  0,  -1.5, 0,   1.5,  0] \textrm{ and }
q^*_1=[ 0.17,  1.69, -2.81, -1.47,  2.25,  3.67, -1.46]$) and added potential $V_c(q)=-V(q)+\frac{1}{2}(q-q^*)^{\top}W_k(q-q^*)$ (again $k \in \{0,1 \}$) so that $\partial_q^2\bar{V}(q^*)=W_k$ ($W_0=7 I_7$ is the soft shaping, $W_1=30 I_7$ is the stiff shaping). The results show the response to an end-effector force $F_{ext}=(10, 0, 0)$ (active for $0.15 s$, vertically shaded in the plots) of the stable closed-loop system controlled with (\ref{eq:IDAPBC}) for all four combinations of $q^*_k$ and $W_k$. Here $D_t=9 I_3$ and $D_\nu =6 I_4$, $G=\bar G$. Closed-loop stability is verified in all cases: $H_{\textrm{cl}}$ is always non increasing when the system is unperturbed. We display the evolution of the error of the norm $||\delta x||$ and the energies. 
\begin{figure}[ht]
    \vspace{-3mm}
    \centering
    \begin{subfigure}{\columnwidth}
        \centering
        \includegraphics[width=\linewidth]{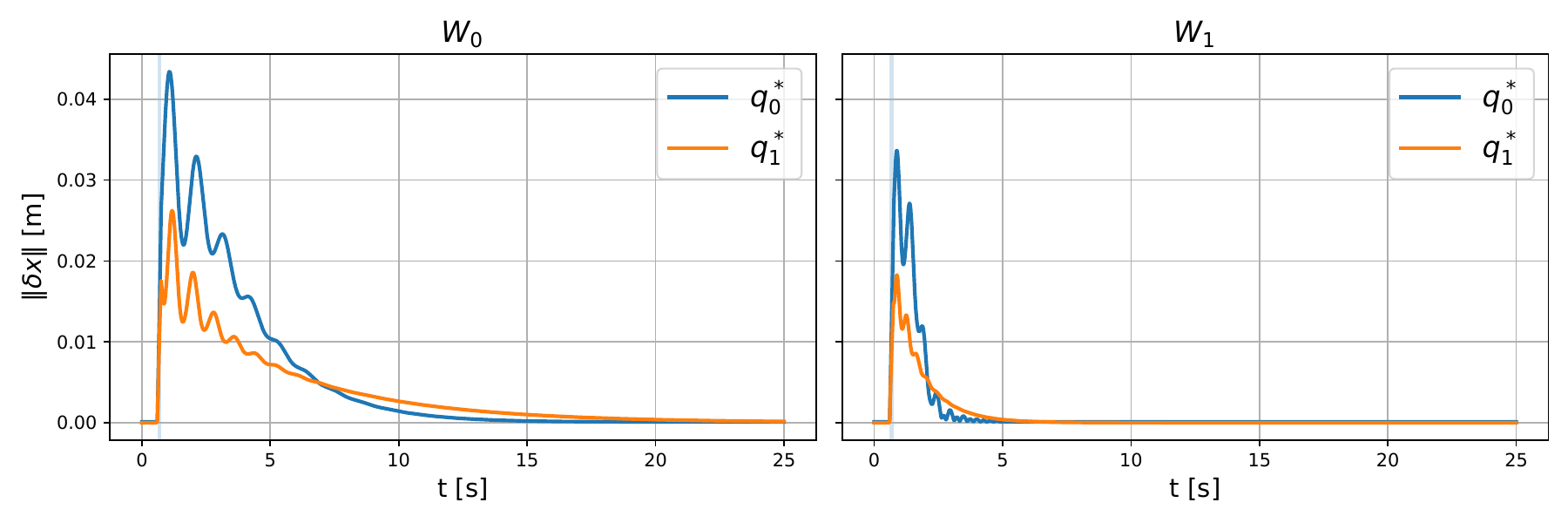}
        \label{fig:delta_u}
    \end{subfigure} 

 \vspace{-5mm}
    \begin{subfigure}{\columnwidth}
        \centering
        \includegraphics[width=\linewidth]{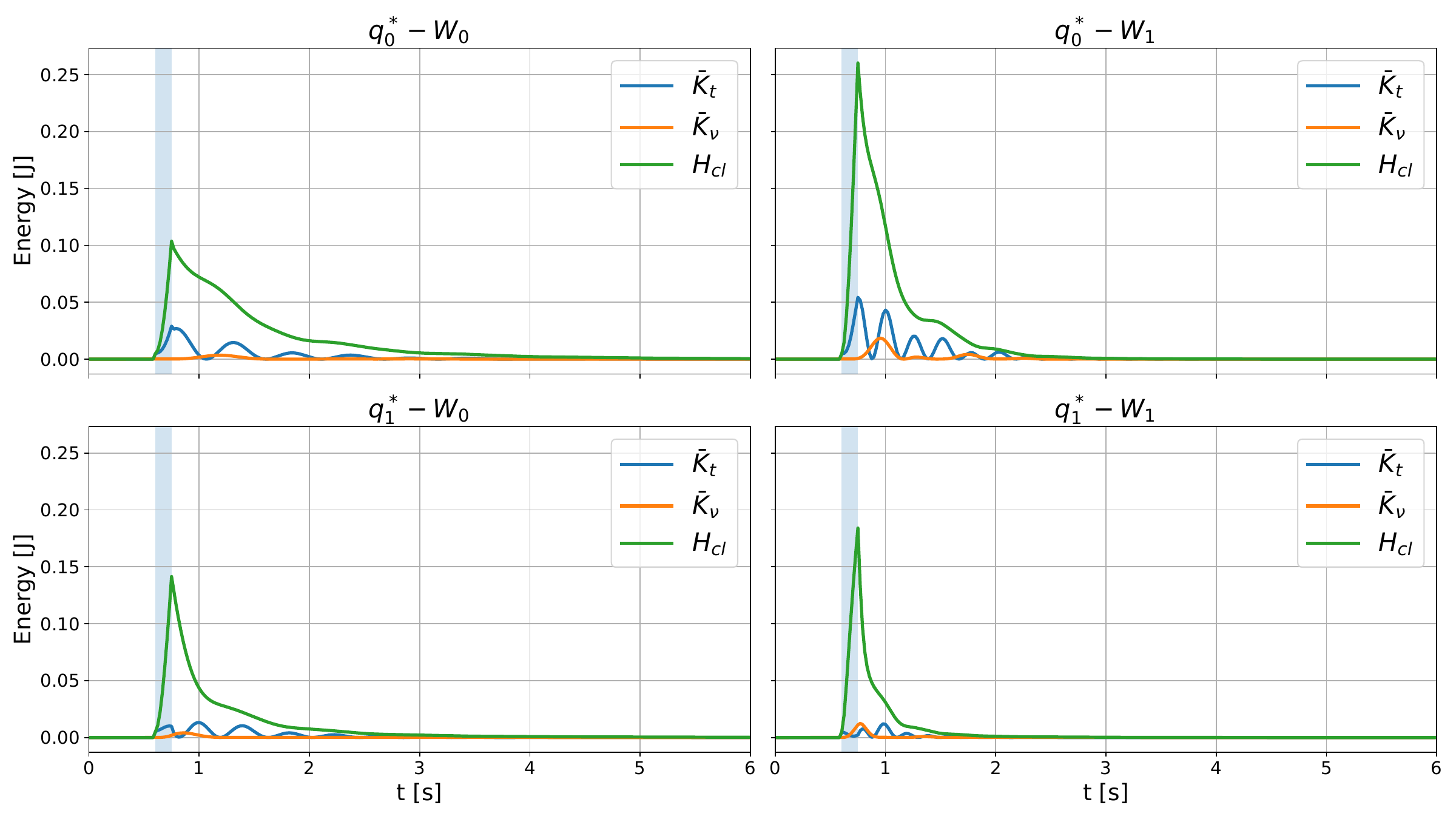}
        \label{fig:energies}
    \end{subfigure}
    \vspace{-10mm}

    \caption{Impedance response for different choices of $q^*$ and $V_c$; Task-space/Null-space/Total energy trajectories.}
    \label{fig:all_column}
\end{figure}
\vspace{-3.05mm}
\section{Conclusion}
\vspace{-2mm}
\label{sec:conc}
We presented a port-Hamiltonian formulation for redundant manipulators in the task space. The resulting model decomposes momentum variable and energy transfers into task-space and null-space components, enabling port-Hamiltonian modeling and control at the task port, while monitoring the energy transfers on the full system.
\bibliography{ifacconf}

\end{document}